\newcommand{\eps}{\varepsilon}
\newcommand{\Ebb}{\mathbb{E}}
\newcommand{\Rbb}{\mathbb{R}}
\newcommand{\Acal}{\mathcal{A}}
\newcommand{\Ecal}{\mathcal{E}}
\newcommand{\Pcal}{\mathcal{P}}
\newcommand{\Xcal}{\mathcal{X}}
\newcommand{\Ycal}{\mathcal{Y}}
\newtheorem{thm}{Theorem}
\newtheorem{lemma}{Lemma}
\newtheorem{defn}{Definition}
\tikzstyle{arw}=[->,>=latex]
\tikzstyle{node}=[draw,rectangle,rounded corners, minimum width=1cm,minimum height =.75 cm]
\title{A Connection between Good Rate-distortion Codes and Backward DMCs}
\author{
\IEEEauthorblockN{Curt Schieler, Paul Cuff}
\vspace{2pt}
\IEEEauthorblockA{Dept. of Electrical Engineering,\\
Princeton University,
Princeton, NJ 08544.\\
E-mail: \{schieler, cuff\}@princeton.edu }
}
\begin{document}
\maketitle
\begin{abstract}
Let ${X^n\in\mathcal{X}^n}$ be a sequence drawn from a discrete memoryless source, and let ${Y^n\in\mathcal{Y}^n}$ be the corresponding reconstruction sequence that is output by a good rate-distortion code. This paper establishes a property of the joint distribution of ${(X^n,Y^n)}$. It is shown that for ${D>0}$, the input-output statistics of a $R(D)$-achieving rate-distortion code converge (in normalized relative entropy) to the output-input statistics of a  discrete memoryless channel (dmc). The dmc is ``backward" in that it is a channel from the reconstruction space $\mathcal{Y}^n$ to source space $\mathcal{X}^n$. It is also shown that the property does not necessarily hold when normalized relative entropy is replaced by variational distance.
\end{abstract}
\section{Introduction}
Consider a discrete memoryless source with generic distribution $P_X$ and a per-symbol distortion measure $d(x,y)$. Given a distortion allowance $D$, the minimum achievable rate of compression (in bits per source symbol) is given by rate-distortion theory as
\begin{equation*}
R(D)=\min_{P_{XY}\in \Pcal(D)} I(X;Y),
\end{equation*}
where
\begin{equation*}
\Pcal(D)=\Big\{P_{XY}: \sum_y P_{XY} =P_X \mbox{ and } \Ebb\,d(X,Y)\leq D\Big\}.
\end{equation*}

One intriguing achievability proof of this classic theorem was given by Wolfowitz in \cite{Wolfowitz1966} (see also \cite[Theorem 7.3]{Csiszar2011}) and goes roughly as follows. %First, let $P_{XY}=P_X P_{Y|X}$ be such that $\Ebb\,d(X,Y)\leq D$. 
A joint distribution ${P_{XY}\in\Pcal(D)}$ gives rise to a random transformation $P_{X|Y}$ from the reproduction alphabet to the source alphabet. Using Feinstein's maximal code construction, create a \emph{channel} code designed for the ``backward'' dmc $\prod_{i=1}^n P_{X|Y}(x_i|y_i)$; here, ``backward" refers to the reversed flow of information from the reconstruction space to the source space. The resulting channel code can be transformed into a rate-distortion code by using the channel decoder as a source encoder and the channel encoder as a source decoder. In \cite{Wolfowitz1966}, it is shown that the distortion criterion is met as long as the channel code has large enough error probability, thus demonstrating that good rate-distortion codes can be constructed from certain channel codes. 

%Operational dualities along these lines were further explored in [], in which it is shown that random capacity-achieving channel codes are almost surely $R(D)$-achieving rate-distortion codes if the maximum likelihood decoder is used as the source encoder.\footnote{Additionally, the source and channel must be functional duals in the sense of [] for this to hold.}

\begin{figure}
\centering
 \begin{subfigure}[t]{0.47\textwidth}
 \begin{center}
 \begin{tikzpicture}[node distance=2cm,thick]
 \node (src)   [coordinate] {};
 \node (enc)   [node,right=1.8cm of src] {$f_n$};
 \node (dec)    [node,right=2.5cm of enc] {$g_n$};
 \node (out)  [coordinate,right=1cm of dec] {};

 \draw[arw,thick] (src) to node [midway,above] {$X^n$ i.i.d.} (enc);
 \draw[arw] (enc) to node [midway,above] {$\{1,\ldots,M\}$} (dec);
 \draw[arw] (dec) to node [midway,above] {$Y^n$} (out);
\end{tikzpicture}
\end{center}
\caption{\small A rate-distortion code is a pair $(f_n,g_n)$ that maps a source sequence $X^n$ to a reconstruction codeword $Y^n$. The code induces a distribution $P_{X^nY^n}$ on the pair $(X^n,Y^n)$.}
\label{fig:truedistribution}
\end{subfigure}
 \\
 \vspace{0.5cm}
 \begin{subfigure}[t]{0.47\textwidth}
 \begin{center}
 \begin{tikzpicture}[node distance=2cm,thick]
 \node (src)   [coordinate] {};
 \node (enc)   [node,left=2.7cm of src] {$g_n$};
 \node (dec)    [node,left=1.5cm of enc] {$P_{X|Y}$};
 \node (out)  [coordinate,left=1cm of dec] {};

 \draw[arw,thick] (src) to node [midway,above] {$\mbox{Unif}\{1,\ldots,M\}$} (enc);
 \draw[arw] (enc) to node [midway,above] {$\tilde{Y}^n$} (dec);
 \draw[arw] (dec) to node [midway,above] {$\tilde{X}^n$} (out);
\end{tikzpicture}
\end{center}
\caption{\small Select a codeword $\tilde{Y}^n$ uniformly at random from the codebook corresponding to $(f_n,g_n)$, then pass $\tilde{Y}^n$ through a memoryless channel $P_{X|Y}$. The pair $(\tilde{X}^n,\tilde{Y}^n)$ induces a distribution $Q_{X^nY^n}$. }
\label{fig:approxdistribution}
\end{subfigure}
\caption{\small Description of the true joint distribution $P_{X^nY^n}$ (Fig.~\ref{fig:truedistribution}) and the approximating joint distribution $Q_{X^nY^n}$ (Fig.~\ref{fig:approxdistribution}).}
 \end{figure}
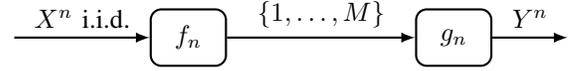
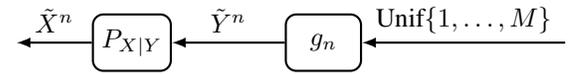

In this paper, we explore another connection between lossy source coding and backward dmc's, one which involves the input-output statistics of good rate-distortion codes. Briefly, the result is as follows. Consider an arbitrary $R(D)$-achieving rate-distortion code\footnote{More precisely, a sequence of codes.} that maps source sequences $X^n$ to reconstruction codewords $Y^n$. The code induces a joint distribution $P_{X^nY^n}$ on the pair $(X^n,Y^n)$ (see Figure~\ref{fig:truedistribution}). 

Using the corresponding codebook, select a codeword uniformly at random as the input to a backward dmc $\prod_{i=1}^n P_{X|Y}(x_i|y_i)$, where $P_{X|Y}$ is derived from the minimizer of $R(D)$.\footnote{Although the minimizer may not be unique, it is well-known that $P_{X|Y}$ is unique.} This channel coding operation induces a joint distribution $Q_{X^nY^n}$ on the pair $(\tilde{X}^n,\tilde{Y}^n)$, where $\tilde{Y}^n$ is the randomly selected codeword and $\tilde{X}^n$ is the channel output (see Figure~\ref{fig:approxdistribution}). We show that, provided some mild necessary conditions are satisfied,
\begin{equation}
\label{introproperty}
\lim_{n\to\infty}\frac1n D(P_{X^nY^n}||Q_{X^nY^n})=0.
\end{equation}
That is, the input-output statistics of nearly all $R(D)$-achieving sequences of rate-distortion codes converge (in the sense of normalized relative entropy) to the output-input statistics of  a backward dmc acting on the rate-distortion codebook.\footnote{We note that a similar claim appears in \cite[Thm. 2]{Pradhan2004}; however, their unconditional claim is not correct. Furthermore, their proof is brief and incorrect.  We comment more on this during our proof. }

The property in \eqref{introproperty} is analogous to the property of capacity-achieving codes for memoryless channels established in~{\cite[Theorem 15]{Han1993}}, namely that the channel output statistics converge (in normalized relative entropy) to a memoryless distribution. More precisely, a capacity-achieving sequence of codes satisfies 
\begin{equation}
\label{intropropertychannel}
\lim_{n\to\infty} \frac1n D(P_{Y^n}||Q_{Y^n})=0,
\end{equation}
where $P_{Y^n}$ is the true distribution of the channel output and ${Q_{Y^n}=\prod_{i=1}^n P_Y(y_i)}$, where $P_Y$ is the unique capacity-achieving output distribution. 

There are various properties of good rate-distortion codes that have been examined in the past (see, for example, \cite{Kanlis1996} and \cite{Weissman2005}). Notably, \cite{Weissman2005} showed that the empirical $k$th-order distribution of a good rate-distortion code converges in distribution almost surely to the minimizer of the $k$th-order rate-distortion function (when that minimizer is unique). Note that the property in \eqref{introproperty}, in contrast, concerns the actual (not empirical) joint distribution and ${k=n}$. In some sense, \eqref{introproperty} complements \cite{Weissman2005} in the same way that \eqref{intropropertychannel} complements the results in \cite{Shamai1997} on the $k$th-order empirical input distribution of good channel codes.

In order to show that good rate-distortion codes yield \eqref{introproperty}, we will first prove in Section~\ref{sec:goodcoordination} that the property holds for good empirical coordination codes. Empirical coordination, studied in \cite{Cuff2010}, is similar to rate-distortion except for the distortion criterion, which is replaced by the requirement that the variational distance between the joint empirical distribution and a target joint distribution $P_{XY}$ converges in probability. Thus, one aims to achieve coordination pairs $(R,P_{Y|X})$ instead of rate-distortion pairs $(R,D)$. Upon demonstrating that \eqref{introproperty} holds for good empirical coordination codes, we show in Section~\ref{sec:gooddistortion} that the property holds for good rate-distortion codes, as well. In Section~\ref{sec:otherdistances}, we show that the property can fail to hold when the distance measure is replaced by variational distance or unnormalized relative entropy. 

Although we do not prove it here, we are able to use the property in \eqref{introproperty} to solve a problem in information-theoretic secrecy relating to Yamamoto's ``Rate-distortion theory of the Shannon cipher system" \cite{Yamamoto1997}. Specifically, one can use the property to show that the results of \cite{Cuff2010partial} can be achieved simply by using good rate-distortion codes, instead of the particular stochastic encoders that \cite{Cuff2010partial} asserts the existence of. It is likely that the property can provide a solution or give insight into other secrecy problems, as well.
%Briefly, the result is as follows. Consider a rate-distortion code $(f,g)$, which maps a source sequence $X^n$ to a reconstruction codeword $Y^n$. The code induces a joint distribution $P_{X^nY^n}$ on the pair $(X^n,Y^n)$. Now take the codebook of $(f,g)$ and select a codeword uniformly at random as the input to a dmc $\prod_{i=1}^n P_{X|Y}(x_i|y_i)$. This process induces a joint distribution $Q_{X^nY^n}$ on the pair $(\tilde{X}^n,\tilde{Y}^n)$, where $\tilde{Y}^n$ is the randomly selected codeword and $\tilde{X}^n$ is the channel output. 
%Consider also the joint distribution $Q_{X^nY^n}$ that results from selecting a codeword uniformly at random from the rate-distortion codebook as the input to a dmc $\prod_{i=1}^n P_{X|Y}(x_i|y_i)$
\section{Good empirical coordination codes}
\label{sec:goodcoordination}
We begin by introducing empirical coordination codes. All results in this paper will assume memoryless sources and finite alphabets. Furthermore, we assume for simplicity that the source satisfies ${P_X(x)>0,\, \forall x\in\Xcal}$. We first give the definition of a coordination code (see Figure~\ref{fig:truedistribution}).
\begin{defn}
\label{codedefn}
An $(n,R_n)$ coordination code consists of an encoder-decoder pair $(f_n,g_n)$ operating at rate $R_n$, where
\begin{IEEEeqnarray}{l}
f_n: \Xcal^n \longrightarrow \{1,\ldots,M\}\\
g_n:\{1,\ldots,M\}\longrightarrow \Ycal^n\\
R_n=\tfrac1n \log M.
\end{IEEEeqnarray}
\end{defn}
A coordination code acts on a memoryless source $X^n$ with generic distribution $P_X$. For a fixed source sequence $x^n$, the code produces a codeword $y^n=g(f(x^n))$. The empirical distribution of the resulting pair $(x^n,y^n)$ is defined for all $(x,y)\in\Xcal\times\Ycal$ by
\begin{equation}
T_{x^ny^n}(x,y) \triangleq \frac1n \sum_{i=1}^n \bold{1}\{(x_i,y_i)=(x,y)\}.
\end{equation}
The empirical distribution of the pair of random variables $(X^n,Y^n)$ is itself a random variable and is denoted by $T_{X^nY^n}$. Variational distance, a measure of the distance between two distributions $P$ and $Q$ with common alphabet, is defined by
\begin{equation}
\lVert P - Q \rVert \triangleq \sup_{A} |P(A)-Q(A)|.
\end{equation}
\begin{defn}
The pair $(R,P_{Y|X})$ is achievable if there exists a sequence of $(n,R_n)$ coordination codes such that
\begin{equation}
\lim_{n\to\infty} R_n = R
\end{equation}
and
\begin{equation}
\lVert T_{X^nY^n} - P_{XY} \rVert \xrightarrow{i.p.}0,
\end{equation}
where $P_{XY}=P_XP_{Y|X}$.
\end{defn}
\begin{thm}[{\cite{Cuff2010}}]
\label{coordinationthm}
The pair $(R,P_{Y|X})$ is achievable if and only if $R\geq I(X;Y)$.
\end{thm}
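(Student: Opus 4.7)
The plan is to prove the two directions separately using standard tools, adapted so that the coordination criterion (variational closeness of the joint \emph{type} to $P_{XY}$) plays the role that expected distortion plays in classical rate-distortion.

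\textbf{Converse.} Assume a sequence of $(n,R_n)$ codes with $R_n\to R$ achieves $(R,P_{Y|X})$. I would begin with the data-processing chain
\begin{equation*}
nR_n\geq \log M\geq H(f_n(X^n))\geq H(Y^n)=I(X^n;Y^n),
\end{equation*}
using that $Y^n$ is a deterministic function both of $f_n(X^n)$ and of $X^n$. Next, single-letterize in the usual fashion: because the source is i.i.d., $I(X^n;Y^n)\geq \sum_{i=1}^n I(X_i;Y_i)$. Introducing a time-sharing variable $Q\sim\mathrm{Unif}\{1,\ldots,n\}$ independent of $(X^n,Y^n)$, and setting $\bar X=X_Q,\bar Y=Y_Q$, the single-letter average equals $I(\bar X;\bar Y\mid Q)$. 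Since $\bar X\sim P_X$ is independent of $Q$ (this uses that $X^n$ is i.i.d.), one has $I(\bar X;\bar Y\mid Q)\geq I(\bar X;\bar Y)$, where the joint law of $(\bar X,\bar Y)$ is $\bar P_n:=\tfrac1n\sum_i P_{X_iY_i}=\Ebb\, T_{X^nY^n}$. The coordination hypothesis together with boundedness of the type vector (via bounded convergence on the finite simplex) forces $\bar P_n\to P_{XY}$ in variational distance, and continuity of mutual information on the finite simplex yields $I_{\bar P_n}(X;Y)\to I(X;Y)$. Taking $n\to\infty$ gives $R\geq I(X;Y)$.

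\textbf{Achievability.} It suffices to handle the strict inequality $R>I(X;Y)$ and then diagonalize. I would use a random-coding/covering construction: independently draw $M=2^{nR}$ codewords $Y^n(1),\ldots,Y^n(M)$ from $P_Y^n$, and define $f_n(x^n)$ to be any index $m$ such that $(x^n,Y^n(m))$ is jointly $\eps$-typical with respect to $P_{XY}$ (and $m=1$ if no such index exists). The covering lemma then guarantees that for $R>I(X;Y)+2\eps$ the failure probability, averaged over the source and codebook, tends to zero. On the success event, joint $\eps$-typicality directly bounds $\lVert T_{X^nY^n}-P_{XY}\rVert$ by a constant multiple of $\eps$ (using finiteness of $\Xcal\times\Ycal$). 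A standard derandomization picks a deterministic codebook that retains convergence in probability, and letting $\eps\downarrow 0$ with a diagonal argument yields a sequence of codes achieving any target $(R,P_{Y|X})$ with $R>I(X;Y)$. The boundary case is handled by composing with rates $R+1/k\downarrow R$.

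\textbf{Main obstacle.} The only subtle step is in the converse, where one must justify that the \emph{deterministic} object $\bar P_n=\Ebb\, T_{X^nY^n}$ converges to $P_{XY}$ in variational distance from the \emph{in-probability} convergence of the random empirical type, and that mutual information of $\bar P_n$ converges to $I(X;Y)$. Both are clean once one invokes finiteness of the alphabets (bounded convergence, continuity of $I$ on a compact simplex), but overlooking the distinction between convergence in probability and convergence of the mean would break the single-letterization.
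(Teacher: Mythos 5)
The paper cites this result from \cite{Cuff2010} without proving it, so there is no in-paper proof to compare against directly. That said, the converse half of your proposal is essentially the chain the paper re-derives inside its proof of Theorem~\ref{mainresultcoordination}: the single-letterization $R_n\geq \tfrac1n I(X^n;Y^n)\geq I(X_J;Y_J)$ via a uniform time-sharing index, the Jensen/bounded-convergence step passing from in-probability convergence of the random type to convergence of its mean $\Ebb\,T_{X^nY^n}=P_{X_JY_J}$, and continuity of mutual information on the finite simplex. You correctly flag the one genuinely subtle point there, namely that in-probability convergence of $T_{X^nY^n}$ does not trivially give convergence of $\Ebb\,T_{X^nY^n}$, and you resolve it the same way the paper does (boundedness of variational distance plus Jensen). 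Your covering-lemma achievability is a standard and correct route, essentially that of \cite{Cuff2010}; the only thing worth making explicit is that ``jointly $\eps$-typical'' must be strong (type-based) typicality so that the success event literally controls $\lVert T_{X^nY^n}-P_{XY}\rVert$ by a constant times $\eps$ --- weak (entropy-based) typicality would not give this bound. Beyond that, the derandomization and the diagonalization with rates $R+1/k\downarrow R$ to reach the boundary are fine, since the definition of achievability only requires $\lim_{n\to\infty}R_n=R$, which may be approached from above.
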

The rate boundary in Theorem~\ref{coordinationthm} justifies the following definition of a ``good" coordination code.
\begin{defn}
Given a source $P_X$, a sequence of $(n,R_n)$ coordination codes $\{(f_n,g_n)\}_{n=1}^\infty$ is good for $P_{Y|X}$ if
\begin{equation}
\label{optrate}
\lim_{n\to\infty} R_n = I(X;Y)
\end{equation}
and
\begin{equation}
\label{opttv}
\lVert T_{X^nY^n} - P_{XY} \rVert \xrightarrow{i.p.}0.
\end{equation}
\end{defn}
To each good sequence of coordination codes for $P_{Y|X}$, we associate two sequences of joint distributions $\{P_{X^nY^n}\}_{n=1}^\infty$ and $\{Q_{X^nY^n}\}_{n=1}^\infty$. The first, $P_{X^nY^n}$, is the distribution of the pair $(X^n,Y^n)$ induced by the code. That is,
\begin{equation}
\label{beginP}
P_{X^nY^n}=P_{X^n}P_{Y^n|X^n},
\end{equation}
where
\begin{equation}
P_{X^n}(x^n)=\prod_{i=1}^n P_X(x_i)
\end{equation}
is the memoryless source distribution and
\begin{equation}
P_{Y^n|X^n}(y^n|x^n) = \bold{1}\big\{y^n = g_n(f_n(x^n))\big\}
\end{equation}
is the composition of the encoder with the decoder. The second distribution, $Q_{X^nY^n}$, is the distribution of the pair $(\tilde{X}^n,\tilde{Y}^n)$, where $\tilde{Y}^n$ is a codeword selected uniformly at random and $\tilde{X}^n$ is the output of the backward dmc when the input is $\tilde{Y^n}$. That is,
\begin{equation}
Q_{X^nY^n}=Q_{Y^n}Q_{X^n|Y^n},
\end{equation}
where
\begin{equation}
Q_{Y^n}(y^n)=\frac{| g_n^{-1}(y^n) |}{M} 
\end{equation}
is the uniform distribution over the codebook (which might contain duplicate codewords) and
\begin{equation}
\label{endQ}
Q_{X^n|Y^n}(x^n|y^n) = \prod_{i=1}^n P_{X|Y}(x_i|y_i)
\end{equation}
is the backward dmc with generic channel $P_{X|Y}$ derived from the joint distribution $P_{XY}=P_XP_{Y|X}$. 

Our main result is the following theorem. \footnote{We exclude the single pathological case $P_{XY}(x,y)=\frac{1}{|\Xcal|}\bold{1}\{x=y\}$, in which it is possible that there are some codebooks such that $P_{X^nY^n}=Q_{X^nY^n}$ and other codebooks such that $D(P_{X^nY^n}||Q_{X^nY^n})=\infty$.}
\begin{thm}
\label{mainresultcoordination}
Let ${P_{XY}\in \Acal}$, where
\begin{equation}
\Acal\triangleq \{P_{XY}: P_{X|Y}(x|y) > 0, \forall (x,y)\}
\end{equation}
Then, for any good sequence of coordination codes for $P_{Y|X}$, it holds that
\begin{equation}
\lim_{n\to\infty} \frac1n D(P_{X^nY^n}||Q_{X^nY^n})=0,
\end{equation}
where $P_{X^nY^n}$ and $Q_{X^nY^n}$ are defined in \eqref{beginP}-\eqref{endQ}. Furthermore, if ${P_{XY}\notin \Acal}$, then there exists a good sequence of coordination codes for $P_{Y|X}$ such that
\begin{equation}
\lim_{n\to\infty} \frac1n D(P_{X^nY^n}||Q_{X^nY^n})=\infty.
\end{equation}
\end{thm}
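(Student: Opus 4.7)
The plan is to handle the two directions separately.

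\textbf{Positive direction.} Since $Y^n=g_n(f_n(X^n))$ is a deterministic function of $X^n$, we have $H(P_{X^nY^n})=H(P_{X^n})=nH(X)$. Combining this with the factorization of $Q_{X^nY^n}$ into $Q_{Y^n}$ and a memoryless backward kernel, I would expand the relative entropy as
\begin{equation*}
D(P_{X^nY^n}\|Q_{X^nY^n}) = -nH(X) - \Ebb[\log Q_{Y^n}(Y^n)] - \sum_{i=1}^n \Ebb[\log P_{X|Y}(X_i|Y_i)].
\end{equation*}
The middle term is bounded by $\log M=nR_n$, because $Q_{Y^n}(y^n)\geq 1/M$ on every codeword. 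For the third term, writing the inner sum in terms of the empirical distribution gives
\begin{equation*}
-\frac{1}{n}\sum_{i=1}^n \log P_{X|Y}(X_i|Y_i) = -\sum_{x,y} T_{X^nY^n}(x,y)\log P_{X|Y}(x|y),
\end{equation*}
and it is here that $P_{XY}\in\Acal$ becomes essential: boundedness of $\log P_{X|Y}$ combined with $T_{X^nY^n}\to P_{XY}$ (in probability, hence in expectation by bounded convergence on a finite alphabet) drives this expectation to $H(X|Y)$. Dividing by $n$ and passing to the limit,
\begin{equation*}
\limsup_{n\to\infty}\tfrac{1}{n}D(P_{X^nY^n}\|Q_{X^nY^n}) \leq -H(X)+I(X;Y)+H(X|Y) = 0,
\end{equation*}
and $D\geq 0$ finishes the first part.

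\textbf{Negative direction.} If $P_{XY}\notin \Acal$, fix $(x_0,y_0)$ with $P_{X|Y}(x_0|y_0)=0$; this forces $|\Xcal|\geq 2$, so $\alpha\triangleq\max_x P_X(x)<1$. Starting from any good sequence $(f_n,g_n)$ (which exists by Theorem~\ref{coordinationthm}), I would perturb it to $(\tilde f_n,\tilde g_n)$ by adding a single new message index $M+1$: pick any source sequence $x^n_*$ with $x_{*,1}=x_0$, set $\tilde f_n(x^n_*)=M+1$ and $\tilde f_n=f_n$ elsewhere, and let $\tilde g_n(M+1)$ be any sequence whose first coordinate is $y_0$. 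The rate $\tfrac{1}{n}\log(M+1)\to I(X;Y)$. Because the perturbation affects only the single source sequence $x^n_*$ and $P_{X^n}(x^n_*)\leq\alpha^n\to 0$, the empirical distribution under the perturbed code still converges to $P_{XY}$ in probability, so $(\tilde f_n,\tilde g_n)$ is good. Yet the pair $(x^n_*,\tilde g_n(M+1))$ has positive $\tilde P_{X^nY^n}$-probability and zero $\tilde Q_{X^nY^n}$-probability (owing to the factor $P_{X|Y}(x_0|y_0)=0$), forcing $D(\tilde P_{X^nY^n}\|\tilde Q_{X^nY^n})=\infty$ for every $n$.

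\textbf{Anticipated obstacle.} The delicate step is upgrading the in-probability convergence of $T_{X^nY^n}$ to convergence of the $\Ebb[\log P_{X|Y}(X_i|Y_i)]$-type quantity. This is precisely what demands $P_{XY}\in\Acal$: without boundedness of $\log P_{X|Y}$ the argument breaks, and the construction in the negative direction shows how the failure manifests. The other ingredients---the entropy identity for a deterministic encoder and the $1/M$ bound on codeword probability under $Q_{Y^n}$---are routine.
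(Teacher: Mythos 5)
Your proof is correct, and the underlying ingredients match the paper's: the bound $-\log Q_{Y^n}(Y^n)\le\log M$ on codewords, the rate condition $R_n\to I(X;Y)$, the convergence in probability of $T_{X^nY^n}$ to $P_{XY}$ upgraded to convergence of bounded expectations (which is exactly where $\Acal$ is needed), and the same perturbation construction for the negative direction. The presentation of the positive direction is slightly more streamlined than the paper's. The paper first separately establishes $\frac1n I(X^n;Y^n)\to I(X;Y)$ and then applies the chain rule $D(P_{X^nY^n}\|Q_{X^nY^n}) = D(P_{X^nY^n}\|P_{Y^n}Q_{X^n|Y^n}) + D(P_{Y^n}\|Q_{Y^n})$, showing each summand vanishes. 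You bypass both the chain rule and the explicit computation of $\frac1n I(X^n;Y^n)$ by exploiting the identity $H(P_{X^nY^n})=nH(X)$ (deterministic decoder) and the $1/M$ lower bound on codeword probability, expanding $\frac1n D$ directly into $-H(X) - \frac1n\Ebb\log Q_{Y^n}(Y^n) - \frac1n\sum_i\Ebb\log P_{X|Y}(X_i|Y_i)$ and letting the $R_n$ and $H(X|Y)$ terms cancel $H(X)$. Algebraically this is the same bound after the paper's $I(X^n;Y^n)=H(Y^n)$ cancellation, but it saves a lemma and makes more transparent that the whole burden falls on the empirical-distribution step---which is precisely the paper's critique of \cite{Pradhan2004}, so your remark about the anticipated obstacle identifies the same crux.
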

\vspace{3mm}
\begin{proof}

We will need the following property of variational distance, which is easily verified. Let ${\eps>0}$ and let $f(x)$ be a function bounded by ${b\in\Rbb}$. Then
\begin{equation}
\label{tvcontinuous}
\lVert P-Q \rVert < \eps \:\Longrightarrow\: \big| \Ebb_Pf(X) - \Ebb_Qf(X) \big | < \eps b.
\end{equation}

We also need the following chain rule of relative entropy:
\begin{IEEEeqnarray}{rCl}
 \IEEEeqnarraymulticol{3}{l}{\nonumber
D(P_{X^nY^n}||Q_{X^nY^n})
}\\
\label{divergencechainrule}\quad &=& D(P_{X^nY^n}||P_{Y^n}Q_{X^n|Y^n})+D(P_{Y^n}||Q_{Y^n}).
\end{IEEEeqnarray}

To begin the proof of Theorem~\ref{mainresultcoordination}, fix ${P_{XY}\in \Acal}$ and a good sequence of coordination codes for $P_{Y|X}$. We first show that such a sequence has the property\footnote{In \cite{Pradhan2004}, the assertion is that the theorem follows from \eqref{propertymutualinfo}. However, this is not the case. It is necessary to establish the steps in \eqref{crucialproperty}-\eqref{intermediatemutualinfo}, which rely on the property of coordination codes in \eqref{tvvanish}.}
\begin{equation}
\label{propertymutualinfo}
\lim_{n\to\infty} \frac1n I(X^n;Y^n)=I(X;Y),
\end{equation}
where $I(X^n;Y^n)$ is evaluated with respect to the true distribution $P_{X^nY^n}$. Throughout the proof, bear in mind that all expectations and mutual information expressions involving $(X^n,Y^n)$ are evaluated with respect to the true distribution $P_{X^nY^n}$.

To show \eqref{propertymutualinfo}, we first introduce an auxiliary random variable ${J\sim \mbox{Unif}\{1,\ldots,n\}}$ independent of $(X^n,Y^n)$. Regurgitating some of the standard steps found in the converse to the lossy source coding theorem, we have
 \begin{IEEEeqnarray}{rCl}
 R_n &=& \frac1n \log M \\
 &\geq& \frac1n H(Y^n)\\
 &\geq& \frac1n I(X^n;Y^n)\\
 &\geq& \frac1n \sum_{i=1}^n I(X_i,Y_i)\\
 &=& I(X_J;Y_J|J)\\
 &\stackrel{(a)}{=}& I(X_J;Y_J,J)\\
 &\geq& I(X_J;Y_J),
 \end{IEEEeqnarray}
 where $(a)$ follows from $X_J \perp J$. If we can show that
\begin{equation}
\label{mutualinfoconverge}
\lim_{n\rightarrow\infty}I(X_J;Y_J)=I(X;Y),
\end{equation}
then the proof of the property in \eqref{propertymutualinfo} will be complete by \eqref{optrate} and the squeeze theorem. To that end, we use several observations from \cite{Cuff2010}. By the boundedness of variational distance, \eqref{opttv} implies
\begin{equation}
\label{expectedtvconverge}
\lim_{n\to\infty}\Ebb\,\lVert T_{X^nY^n}-P_{XY}\rVert= 0.
\end{equation}
Upon noting that
\begin{equation}
\Ebb\,T_{X^nY^n}=P_{X_JY_J},
\end{equation}
we have
\begin{IEEEeqnarray}{rCl}
 \lVert P_{X_JY_J} - P_{XY} \rVert &=& \lVert \Ebb\,T_{X^nY^n} - P_{XY}\rVert\\
 &\stackrel{(a)}{\leq}& \Ebb\,\lVert T_{X^nY^n}-P_{XY} \rVert,
\end{IEEEeqnarray}
where (a) follows from Jensen's inequality. Therefore,
\begin{equation}
\label{tvvanish}
\lim_{n\to\infty} \lVert P_{X_JY_J} - P_{XY} \rVert = 0.
\end{equation}
Since mutual information is continuous with respect to variational distance for finite alphabets (this follows from \eqref{tvcontinuous}), we see that \eqref{tvvanish} yields \eqref{mutualinfoconverge}. Thus, the property in \eqref{propertymutualinfo} holds.

We remark that the property in \eqref{tvvanish} underlies the reason that we are considering empirical coordination codes. In brief, it arises more naturally in an empirical coordination setting than in a rate-distortion setting. We will invoke \eqref{tvvanish} again shortly.

With \eqref{propertymutualinfo} in hand, we now show that
\begin{equation}
 \label{almostthm}\lim_{n\rightarrow\infty} \frac1n D(P_{X^nY^n}||P_{Y^n}Q_{X^n|Y^n})=0.
\end{equation}
 To start, we have
 \begin{IEEEeqnarray}{rCl}
 \IEEEeqnarraymulticol{3}{l}{
\nonumber\lim_{n\to\infty}\frac1n \Ebb\Big[\log \frac{\prod_{i=1}^n P_{X|Y}(X_i|Y_i)}{\prod_{i=1}^n P_{X}(X_i)}\Big]}\\
 \quad &=&   \label{crucialproperty}\lim_{n\to\infty}\frac1n \sum_{i=1}^n \Ebb\Big[\log \frac{P_{X|Y}(X_i|Y_i)}{P_X(X_i)}\Big]\\
 &=& \lim_{n\to\infty}\Ebb\Big[\log\frac{P_{X|Y}(X_J|Y_J)}{P_X(X_J)}\Big]\\
 &\stackrel{(a)}{=}& \lim_{n\to\infty}\Ebb\Big[\log\frac{P_{X|Y}(X|Y)}{P_X(X)}\Big] \\
 \label{intermediatemutualinfo} &=& I(X;Y).
\end{IEEEeqnarray}
To see how (a) follows, first note that the function
\begin{equation}
f(x,y)=\log \frac{P_{X|Y}(x|y)}{P_{X}(x)}
\end{equation}
is bounded due to the restriction ${P_{XY}\in\Acal}$ (in fact, this is the only step where the restriction is needed). Then, use \eqref{tvvanish} along with \eqref{tvcontinuous}. Continuing, we have
\begin{IEEEeqnarray}{rCl}
\IEEEeqnarraymulticol{3}{l}{
 \nonumber\lim_{n\to\infty}\frac1n D(P_{Y^n}P_{X^n|Y^n}||P_{Y^n}Q_{X^n|Y^n})
 }\\
 \:&=& \lim_{n\to\infty}\frac1n \Ebb\Big[ \log \frac{P_{X^n|Y^n}(X^n|Y^n)}{Q_{X^n|Y^n}(X^n|Y^n)} \Big]\\
 &=& \lim_{n\to\infty}\frac1n \Ebb\Big[\log \frac{P_{X^n|Y^n}(X^n|Y^n)}{P_{X^n}(X^n)}\Big]\\
 &&\quad-\,\lim_{n\to\infty}\frac1n \Ebb\Big[\log \frac{Q_{X^n|Y^n}(X^n|Y^n)}{P_{X^n}(X^n)}\Big]\\
 &=& \lim_{n\to\infty}\frac1n I(X^n;Y^n)\\
 &&\quad -\,\lim_{n\to\infty}\frac1n \Ebb\Big[\log \frac{\prod P_{X|Y}(X_i|Y_i)}{\prod P_{X}(X_i)}\Big]\\
 &\stackrel{(a)}{=}& \lim_{n\to\infty}\frac1n I(X^n;Y^n) - I(X;Y)\\
 &\stackrel{(b)}{=}& 0,
\end{IEEEeqnarray}
where (a) is due to \eqref{intermediatemutualinfo} and (b) is due to \eqref{propertymutualinfo}. This proves the property in \eqref{almostthm}. 

Finally, write 
\begin{IEEEeqnarray}{rCl}
\IEEEeqnarraymulticol{3}{l}{
\nonumber\lim_{n\to\infty} \frac1n D(P_{Y^n}||Q_{Y^n})
}\\
\:&=& \lim_{n\to\infty} \frac1n\sum_{y^n} P_{Y^n}(y^n) \log\frac{1}{Q_{Y^n}(y^n)}\\
&&\quad -\,\lim_{n\to\infty} \frac1n H(Y^n)\\
&\leq& \lim_{n\to\infty} \frac1n \log M - \lim_{n\to\infty} \frac1n H(Y^n)\\
&\stackrel{(a)}{=}& 0
\end{IEEEeqnarray}
where (a) follows from the squeeze theorem. To complete the first part of the theorem, invoke the chain rule of relative entropy in \eqref{divergencechainrule}.

To show the second part of Theorem~\ref{mainresultcoordination}, fix $P_{XY}\notin \Acal$ and a good sequence of coordination codes for the corresponding $P_{Y|X}$. The condition $P_{XY}\notin \Acal$ implies the existence of a pair $(x,y)$ such that $P_{X|Y}(x|y)=0$. For every $n$, append a codeword $y^n$ to the codebook and associate with it a sequence $x^n$ such that $$|i:(x_i,y_i)=(x,y)|>0.$$ Accordingly, modify $f_n$ and $g_n$ so that $y^n=g(f(x^n))$. Such a modification maintains the goodness of the code, but now $P_{X^nY^n}$ has support on $(x^n,y^n)$, while $Q_{X^nY^n}$ does not. Consequently, $\frac1n D(P_{X^nY^n}||Q_{X^nY^n})$ diverges. 
\end{proof}
\section{Good rate-distortion codes}
\label{sec:gooddistortion}
In this section, we establish the counterpart to Theorem~\ref{mainresultcoordination} for good rate-distortion codes. A rate-distortion code is defined according to Definition~\ref{codedefn}. The notion of good is also similar; in this case, a good code is an $R(D)$-achieving one.
\begin{defn}
Given a source $P_X$ and a distortion measure $d(x,y)$, a sequence of $(n,R_n)$ rate-distortion codes $\{(f_n,g_n)\}_{n=1}^\infty$ is good for distortion $D$ if 
\begin{equation}
\label{optrate2}
\lim_{n\to\infty} R_n = R(D)
\end{equation}
and
\begin{equation}
\label{optdistortion}
\lim_{n\to\infty} \frac1n \sum_{i=1}^n \Ebb\,d(X_i,Y_i)  \leq D.
\end{equation}
\end{defn}
For a fixed per-letter distortion measure $d(x,y)$, the rate-distortion function is defined for ${D\geq D_{\min}}$, where ${D_{\min}=\Ebb[\min_y d(X,y)]}$. Without loss of generality, we assume that $D_{\min}=0$.

 In view of the restriction in Theorem~\ref{mainresultcoordination} to ${P_{XY}\in\Acal}$, the following lemma is useful.
 \begin{lemma}[{\cite[Ch. 2, Lemma 1]{Berger1971}}]
 \label{reduce}
 Let ${D>0}$. Any $P_{XY}$ that minimizes $R(D)$ is such that, if ${P_{Y|X}(y|x)=0}$ for some $(x,y)$, then ${P_{Y|X}(y|x')=0}$ for all $x'\in \Xcal$. Accordingly, the reproduction symbol $y$ may be deleted from $\Ycal$ without affecting $R(D)$.
 \end{lemma}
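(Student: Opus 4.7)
The plan is to argue by contradiction using a small perturbation of a minimizer. Suppose $P^*_{XY}$ achieves $R(D)$ and fails the claim, so there exist $(x_0,x_1,y_0)$ with $P^*_{Y|X}(y_0|x_0)=0$ but $P^*_{Y|X}(y_0|x_1)>0$; in particular $P^*_Y(y_0)\geq P_X(x_1)P^*_{Y|X}(y_0|x_1)>0$. Since $P^*_{Y|X}(\cdot|x_0)$ is a probability distribution, there is some $y_1$ with $P^*_{Y|X}(y_1|x_0)>0$. For small $\eps>0$, define $P^\eps_{Y|X}$ by shifting mass $\eps$ from $y_1$ to $y_0$ in the conditional distribution at $x_0$, leaving all other entries unchanged.

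The next step is to compute the effect of this perturbation on both $I(X;Y)$ and $\Ebb\,d(X,Y)$. The change in mutual information is dominated by the new term at the $(x_0,y_0)$ coordinate, which equals $P_X(x_0)\,\eps\,\log\bigl(\eps/P^\eps_Y(y_0)\bigr)$; since $P^\eps_Y(y_0)\to P^*_Y(y_0)>0$, this contributes $P_X(x_0)\,\eps\log\eps + O(\eps)$. The perturbation of the $y_1$ slot and the smooth dependence of all other log terms on the perturbed output marginal contribute only $O(\eps)$. Hence $I(X;Y)$ strictly decreases, at rate $\eps|\log\eps|$, with a one-sided derivative of $-\infty$. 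Meanwhile $\Ebb\,d(X,Y)$ changes linearly by $P_X(x_0)\,\eps\bigl(d(x_0,y_0)-d(x_0,y_1)\bigr)=O(\eps)$.

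If $\Ebb_{P^*}\,d(X,Y)<D$, the perturbed distribution is feasible for all small $\eps$ and strictly reduces $I$, contradicting optimality. The main obstacle is the boundary case $\Ebb_{P^*}\,d(X,Y)=D$ in which the primary perturbation would increase distortion; here I would invoke the hypothesis $D>0=D_{\min}$ to find some feasible $P'_{Y|X}$ with $\Ebb_{P'}\,d(X,Y)<D$ (for instance, the distribution that sends each $x$ to an argmin of $d(x,\cdot)$), and then use convexity of $I(X;Y)$ in the channel to combine $P^\eps_{Y|X}$ with $P'_{Y|X}$ at weight $\delta=O(\eps)$. This restores feasibility while altering $I$ by at most $O(\eps)$, which is still swamped by the $\eps|\log\eps|$ savings from the primary perturbation, yielding the desired contradiction. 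The deletion statement then follows immediately: any $y$ with $P_{Y|X}(y|x)=0$ for some $x$ satisfies $P_{Y|X}(y|x)=0$ for all $x$ in every minimizer, so $P_Y(y)=0$ and $y$ can be removed from $\Ycal$ without changing $R(D)$.
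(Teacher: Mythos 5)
The paper does not prove this lemma; it cites it directly from Berger's book (\cite[Ch.~2, Lemma~1]{Berger1971}), so there is no in-paper proof to compare against. Your perturbation argument is a valid self-contained proof, and it is worth noting that it reaches the result by a somewhat different route than the usual one. The standard derivation (Berger, and also implicit in the parametric form of the rate-distortion solution) proceeds through the Kuhn--Tucker conditions: one shows that any minimizer must have the exponential form $P_{Y|X}(y|x)=P_Y(y)e^{-s\,d(x,y)}/\sum_{y'}P_Y(y')e^{-s\,d(x,y')}$ for some $s>0$, from which it is immediate that $P_Y(y)>0$ forces $P_{Y|X}(y|x)>0$ for every $x$. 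Your argument instead works from first principles: the $\eps\log\eps$ singularity in mutual information makes any ``partially supported'' reconstruction symbol strictly suboptimal, without needing to establish the functional form of the minimizer.

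Your proof is correct. The key step --- that the perturbation at the $(x_0,y_0)$ slot contributes $P_X(x_0)\,\eps\log\eps + O(\eps)$ while all other affected terms are $O(\eps)$ (because $P^*_Y(y_0)>0$, $P^*_Y(y_1)>0$, and $P^*_{Y|X}(y_1|x_0)>0$ keep the logarithms bounded and smooth) --- is sound. Your treatment of the boundary case $\Ebb_{P^*}d=D$ via mixing with a strictly-feasible $P'_{Y|X}$ at weight $\delta=O(\eps)$, using convexity of $I(X;Y)$ in $P_{Y|X}$, correctly restores feasibility at only $O(\eps)$ cost in mutual information, which is dominated by the $\eps|\log\eps|$ saving. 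One minor point worth making explicit is that this mixing step is only needed when $d(x_0,y_0)>d(x_0,y_1)$; otherwise the primary perturbation already preserves feasibility. The concluding deletion statement is also correct, though it should be phrased per fixed minimizer: for a given optimal $P^*_{XY}$, the first part forces $P^*_Y(y)=0$, so $P^*$ is supported on $\Xcal\times(\Ycal\setminus\{y\})$ and remains feasible and optimal over the reduced alphabet, leaving $R(D)$ unchanged.
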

Thus, we have that for any ${D>0}$ we can reduce the reproduction alphabet $\Ycal$, without penalty, to an alphabet $\Ycal^*(D)$ such that any $P_{XY}$ minimizing $R(D)$ satisfies ${P_{XY}(x,y)>0}$ for all $(x,y)\in\Xcal\times\Ycal^*$. In particular, $P_{XY}\in\Acal$. It is shown in \cite{Berger1971} that this does not hold for $D=0$. From this point on, we assume that $\Ycal$ has been reduced according to Lemma~\ref{reduce}, so that Theorem~\ref{mainresultcoordination} can be invoked.

Although the minimizer of $R(D)$ need not be unique, it turns out that the corresponding backward channel $P_{X|Y}$ is unique. This is analogous to the fact that the capacity-achieving output distribution is unique, even though the input distribution is not.
\begin{lemma}[{\cite[Problem 8.3]{Csiszar2011}}]
If $P_{XY}$ and $Q_{XY}$ both minimize $R(D)$, then ${P_{X|Y}=Q_{X|Y}}$.
\end{lemma}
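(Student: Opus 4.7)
The plan is to exploit the convexity of $I(X;Y)$ as a function of $P_{XY}$ (with the source marginal $P_X$ fixed) together with the strict convexity of relative entropy in its first argument.

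First, observe that the feasible set $\Pcal(D)$ is convex and that the distortion constraint is linear in $P_{XY}$, so the midpoint $R_{XY} \triangleq \tfrac12(P_{XY}+Q_{XY})$ lies in $\Pcal(D)$ and has marginal $P_X$. Convexity of mutual information in the joint (with fixed marginal $P_X$) then yields $I_R(X;Y) \leq \tfrac12 I_P(X;Y) + \tfrac12 I_Q(X;Y) = R(D)$, and by optimality of $R(D)$ equality must hold throughout.

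Next, I would use the decomposition $I(X;Y) = \sum_y P_Y(y)\, D(P_{X|Y=y}\,\|\,P_X)$. A direct computation gives $R_Y(y) = \tfrac12(P_Y(y)+Q_Y(y))$, and $R_{X|Y=y}$ is the convex combination of $P_{X|Y=y}$ and $Q_{X|Y=y}$ with weights $P_Y(y)/(P_Y(y)+Q_Y(y))$ and $Q_Y(y)/(P_Y(y)+Q_Y(y))$. Applying convexity of $D(\cdot\,\|\,P_X)$ term by term recovers exactly the bound $\tfrac12 I_P + \tfrac12 I_Q$. Since equality must hold in this bound, and since $D(\cdot\,\|\,P_X)$ is strictly convex in its first argument, we must have $P_{X|Y=y}=Q_{X|Y=y}$ at every $y$ where both $P_Y(y)$ and $Q_Y(y)$ are strictly positive.

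To extend the equality to the whole reproduction alphabet, I would invoke Lemma~\ref{reduce}: applied to each of $P$ and $Q$ on the reduced alphabet $\Ycal^*(D)$, it implies $P_{Y|X}(y|x)>0$ for all $(x,y)\in \Xcal\times\Ycal^*(D)$, and hence $P_Y(y)>0$ on all of $\Ycal^*(D)$; the same holds for $Q_Y$. Thus the equality of conditionals extends to the whole (reduced) alphabet. The main obstacle, and essentially the only nontrivial one, is this support bookkeeping; the convex combination computation and the strict convexity step are routine.
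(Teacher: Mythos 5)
The paper doesn't prove this lemma — it cites Csisz\'ar--K\"orner, Problem 8.3 — so what you should be compared against is the standard argument, and yours is essentially it: the midpoint $R_{XY}=\tfrac12(P_{XY}+Q_{XY})$ is feasible, convexity of $I$ in $P_{Y|X}$ (with $P_X$ fixed) forces $I_R = R(D)$, the decomposition $I(X;Y)=\sum_y P_Y(y)\,D(P_{X|Y=y}\,\|\,P_X)$ turns this equality into termwise equality in a convex bound, and strict convexity of $D(\cdot\,\|\,P_X)$ (which requires $P_X>0$, assumed in Section II) forces $P_{X|Y=y}=Q_{X|Y=y}$ wherever the mixing weight $\lambda_y = P_Y(y)/(P_Y(y)+Q_Y(y))$ lies in $(0,1)$. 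All of that is correct, and the bookkeeping that shows $R_Y(y)\lambda_y = \tfrac12 P_Y(y)$ so the termwise bound really does sum to $\tfrac12 I_P + \tfrac12 I_Q$ is exactly right.

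The one soft spot is your final paragraph. Lemma~\ref{reduce} is a statement about a \emph{single} minimizer; it does not by itself produce one alphabet $\Ycal^*(D)$ on which every minimizer has full column support, and invoking such a common $\Ycal^*(D)$ to conclude both $P_Y>0$ and $Q_Y>0$ everywhere is circular — you would need essentially the uniqueness you are trying to prove (or an additional argument) to know the supports coincide. In fact the supports of two minimizers can genuinely differ (e.g., if $\Ycal$ contains two reproduction letters with identical distortion columns). The honest conclusion of your core argument is uniqueness of the backward kernel on the common support $\{y:P_Y(y)>0 \text{ and } Q_Y(y)>0\}$, which is the natural reading of the lemma (outside that set at least one of $P_{X|Y=y},Q_{X|Y=y}$ is simply undefined) and is all that the paper needs: $Q_{X^n|Y^n}$ in Theorem~\ref{mainresultdistortion} is evaluated only on codewords, which lie in the support of the minimizer used. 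So drop or rephrase the last paragraph; the first two paragraphs already prove the lemma in the only sense in which it can hold.
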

We now state the counterpart to Theorem~\ref{mainresultcoordination}. The proof is immediate once we use the fact that good rate-distortion codes are good empirical coordination codes.
\begin{thm}
\label{mainresultdistortion}
Let ${D>0}$, and assume that the reproduction alphabet has been reduced to $\Ycal^*(D)$. 
Then, for any good sequence of rate-distortion codes for $D$, it holds that
\begin{equation}
\lim_{n\to\infty} \frac1n D(P_{X^nY^n}||Q_{X^nY^n})=0,
\end{equation}
where
\begin{equation}
P_{X^nY^n}(x^n,y^n)=\prod_{i=1}^n P_X(x_i)\:\bold{1}\big\{y^n = g_n(f_n(x^n))\big\}
\end{equation}
and
\begin{equation}
Q_{X^nY^n}(x^n,y^n)=\frac{| g_n^{-1}(y^n) |}{M} \prod_{i=1}^n P_{X|Y}(x_i|y_i),
\end{equation}
where $P_{X|Y}$ is the unique backward channel corresponding to $D$.
\end{thm}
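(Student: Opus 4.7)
The plan is to recycle the argument of Theorem~\ref{mainresultcoordination} almost verbatim, by isolating exactly where the coordination hypothesis was used substantively. Examining that proof, the empirical coordination condition entered only through the consequence $\lVert P_{X_JY_J}-P_{XY}\rVert\to 0$ (see \eqref{tvvanish}), which in turn was used only to compute the limits of expectations under $P_{X_JY_J}$ of two specific bounded functions: $d(x,y)$ (implicitly, in tying $P_{X_JY_J}$ to $P_{XY}$) and $\log\tfrac{P_{X|Y}(x|y)}{P_X(x)}$. Therefore it suffices to pin down the limit points of $P_{X_JY_J}$ and check that both expectations behave correctly.

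First I would use compactness of the simplex over $\Xcal\times\Ycal^*(D)$ to pass to a convergent subsequence $P_{X_JY_J}\to\tilde P_{XY}$ and argue that every limit $\tilde P_{XY}$ minimizes $R(D)$. The marginal $\tilde P_X=P_X$ is inherited from the i.i.d.\ source. Continuity of $\Ebb\,d$ together with \eqref{optdistortion} gives $\tilde P_{XY}\in\Pcal(D)$. Running the standard converse cascade $R_n\geq\tfrac1n I(X^n;Y^n)\geq\tfrac1n\sum_i I(X_i;Y_i)=I(X_J;Y_J|J)\geq I(X_J;Y_J)$ and using continuity of mutual information yields $I(\tilde X;\tilde Y)\leq R(D)$; combined with $\tilde P_{XY}\in\Pcal(D)$ this forces equality, so $\tilde P_{XY}$ is an $R(D)$-minimizer. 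Invoking the uniqueness lemma for the backward channel, $\tilde P_{X|Y}=P_{X|Y}$ for every subsequential limit.

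Next I would observe that the reduction to $\Ycal^*(D)$ places all $R(D)$-minimizers in $\Acal$, so the function $f(x,y)=\log\tfrac{P_{X|Y}(x|y)}{P_X(x)}$ is bounded on $\Xcal\times\Ycal^*(D)$. Since every limit point $\tilde P_{XY}$ has $\tilde P_{X|Y}=P_{X|Y}$, one computes $\Ebb_{\tilde P}f=I(\tilde X;\tilde Y)=R(D)$ at every limit point; hence $\Ebb_{P_{X_JY_J}}f\to R(D)$ along the full sequence. This supplies the rate-distortion counterpart of \eqref{crucialproperty}--\eqref{intermediatemutualinfo}. By the same subsequence argument, $I(X_J;Y_J)\to R(D)$, and then the cascade together with $R_n\to R(D)$ squeezes $\tfrac1n I(X^n;Y^n)\to R(D)$, the counterpart of \eqref{propertymutualinfo}.

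With these two limits in hand, the remainder of the proof of Theorem~\ref{mainresultcoordination}---the expansion of $\tfrac1n D(P_{Y^n}P_{X^n|Y^n}\|P_{Y^n}Q_{X^n|Y^n})$ as the difference $\tfrac1n I(X^n;Y^n)-\Ebb_{P_{X_JY_J}}f$, the chain rule \eqref{divergencechainrule}, and the bound $\tfrac1n D(P_{Y^n}\|Q_{Y^n})\leq R_n-\tfrac1n H(Y^n)\to 0$---carries over verbatim. The only real obstacle is the possible non-uniqueness of the $R(D)$-minimizer, and this is sidestepped completely because the two quantities I need to control depend on the minimizer only through the unique conditional $P_{X|Y}$.
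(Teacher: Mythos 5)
Your proof is correct, and it takes a genuinely different route from the paper's. The paper's proof of Theorem~\ref{mainresultdistortion} is essentially a one-liner: it cites \cite[Theorem 11]{Cuff2010} or \cite[Theorem 9]{Weissman2005} to assert that a good rate-distortion code is automatically a good empirical coordination code for some $R(D)$-minimizer, and then invokes Theorem~\ref{mainresultcoordination} as a black box. You instead unwind the proof of Theorem~\ref{mainresultcoordination}, observe (correctly) that the coordination hypothesis entered only through the single-letter consequence $\lVert P_{X_JY_J}-P_{XY}\rVert\to 0$, and re-derive the needed convergence directly from \eqref{optrate2}--\eqref{optdistortion} via a compactness/subsequence argument over the simplex on $\Xcal\times\Ycal^*(D)$. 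The payoff of your route is twofold: it is self-contained (no appeal to the external theorems reducing rate-distortion to coordination), and it handles non-uniqueness of the $R(D)$-minimizer cleanly and explicitly, since you only show that every subsequential limit of $P_{X_JY_J}$ lies in the set of minimizers and then exploit the uniqueness of the backward channel $P_{X|Y}$---whereas the paper's phrasing (``a good coordination code for \emph{some} $P_{Y|X}$ minimizing $R(D)$'') quietly presumes a single target distribution. The cost of your route is length; the paper's reference-based proof is much shorter when the cited results are available. One small point worth being explicit about in a final write-up: the inequality $\tfrac1n I(X^n;Y^n)\geq\tfrac1n\sum_i I(X_i;Y_i)$ in your converse cascade relies on the i.i.d.\ structure of $X^n$, and the identity $\Ebb_{\tilde P}f=I(\tilde X;\tilde Y)$ uses both $\tilde P_X=P_X$ and $\tilde P_{X|Y}=P_{X|Y}$; you invoke both facts but compress them, and spelling them out would make the argument airtight.
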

\begin{proof}
From \cite[Theorem 11]{Cuff2010} or \cite[Theorem 9]{Weissman2005}, we have that a good rate-distortion code for $D$ is a good empirical coordination code for some $P_{Y|X}$ minimizing $R(D)$. Due to the reduction to $\Ycal^*(D)$, we have $P_{XY}\in\Acal$, which allows us to invoke Theorem~\ref{mainresultcoordination}.
\end{proof}
\section{Variational Distance}
\label{sec:otherdistances}
In this section, we show that Theorem~\ref{mainresultcoordination} does not hold when we replace normalized divergence by variational distance. From Pinsker's inequality, this implies that it does not hold in unnormalized relative entropy, either.
\begin{thm}
There exists $P_{XY}\in\Acal$ and a sequence of good coordination codes for the corresponding $P_{Y|X}$ such that
\begin{equation}
\lim_{n\to\infty}  \lVert P_{X^nY^n} - Q_{X^nY^n} \rVert \neq 0,
\end{equation}
where $P_{X^nY^n}$ and $Q_{X^nY^n}$ are defined in \eqref{beginP}-\eqref{endQ}.
\end{thm}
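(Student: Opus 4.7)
The plan is to exhibit a good coordination code whose codebook contains ``ghost'' reconstruction sequences that the encoder never produces but which carry substantial mass under $Q_{Y^n}$; this drives a persistent variational gap on the $Y^n$-marginal even though normalized divergence still vanishes by Theorem~\ref{mainresultcoordination}.

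Concretely, I would start with any $P_{XY}\in\Acal$ satisfying $I(X;Y)<\log|\Ycal|$, for instance $\Xcal=\Ycal=\{0,1\}$ with $P_X$ uniform and $P_{Y|X}$ a BSC of crossover $p\in(0,1/2)$. Taking any good coordination code $\{(f_n,g_n)\}$ for $P_{Y|X}$ with $M_n=2^{nR_n}$ messages, the condition $R_n\to I(X;Y)<\log|\Ycal|$ ensures that for all sufficiently large $n$ one can choose a sequence $y_0^n\in\Ycal^n$ outside the codebook. I would then build a modified code $(f_n',g_n')$ with $M_n'=2M_n$ messages by setting $g_n'(m)=g_n(m)$ for $m\leq M_n$, $g_n'(m)=y_0^n$ for $M_n<m\leq 2M_n$, and $f_n'=f_n$, so that the $M_n$ new ``ghost'' messages are never used.

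Next I would verify that the modified code is still good: the new rate $R_n'=R_n+\tfrac{1}{n}$ still converges to $I(X;Y)$, and since the encoder is unchanged the joint distribution $P_{X^nY^n}'$ equals the original $P_{X^nY^n}$, preserving \eqref{opttv}. The gap then reads off the $Y^n$-marginal: $P_{Y^n}'(y_0^n)=0$ while $Q_{Y^n}'(y_0^n)=|g_n'^{-1}(y_0^n)|/M_n'=M_n/(2M_n)=\tfrac{1}{2}$, so testing on the singleton $\{y_0^n\}$ yields $\lVert P_{X^nY^n}'-Q_{X^nY^n}'\rVert\geq\tfrac{1}{2}$ for every sufficiently large $n$.

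There is no serious obstacle; the only point requiring care is checking that the modification does not spoil goodness, which is immediate because the rate penalty is $O(1/n)$ and $P_{X^nY^n}'$ is unaltered. The construction makes plain that the $\tfrac{1}{n}$-normalization in Theorem~\ref{mainresultcoordination} is essential: appending ghost messages contributes a constant $\log 2$ to $D(P'\|Q')$ that disappears only after normalization, while being clearly visible in variational (and hence, via Pinsker, in unnormalized-divergence) terms.
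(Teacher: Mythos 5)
Your proof is correct, but it takes a genuinely different route from the paper. The paper fixes a good coordination code with a \emph{bijective} decoder and rate $R_n = I(X;Y) + n^{-1/2+\delta}$, with $P_Y$ chosen to be capacity-achieving for the backward channel $P_{X|Y}$. Reinterpreting $(g_n, f_n)$ as a channel encoder/decoder pair for $\prod P_{X|Y}$, vanishing variational distance would force the channel-decoding error probability to vanish at a rate strictly above capacity, contradicting the strong converse to the channel coding theorem. Your proof instead inflates the message set with ``ghost'' indices all decoding to an unused sequence $y_0^n$, so that $Q_{Y^n}(y_0^n)=\tfrac12$ while $P_{Y^n}(y_0^n)=0$, giving an immediate variational gap of at least $\tfrac12$ while the rate penalty is $O(1/n)$ and the induced $P_{X^nY^n}$ is untouched. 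Both arguments prove the existence claim as stated; the trade-off is that yours is more elementary and self-contained (no strong converse needed), while the paper's shows something strictly stronger and less escapable — the failure persists even when the codebook has no unused or duplicated codewords, so it is not merely a definitional loophole in what counts as a ``good'' code. Your closing observation that appending ghost messages contributes exactly $\log 2$ to $D(P\|Q)$ (via the $Q_{Y^n}$-marginal being halved on the true codebook) is accurate and does cleanly illustrate why the $\tfrac1n$-normalization in Theorem~\ref{mainresultcoordination} is essential.
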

\begin{proof}
Let ${P_{XY}\in\Acal}$ be such that $P_{Y}$ is an capacity-achieving input distribution of the channel $P_{X|Y}$. Fix a sequence of good empirical coordination codes $\{(f_n,g_n)\}_{n=1}^\infty$ for $P_{Y|X}$ such that the decoder is bijective and
\begin{equation}
\label{slowconverge}
R_n= I(X;Y)+n^{-\frac12 + \delta},
\end{equation}
for some $\delta>0$.
This is possible by Theorem~\ref{coordinationthm}. By way of contradiction, suppose that
\begin{equation}
\label{contradiction}
\lim_{n\to\infty}  \lVert P_{X^nY^n} - Q_{X^nY^n} \rVert = 0.
\end{equation}
To reach a contradiction, we first define joint distributions $P_{X^nY^n\widehat{M}}$ and $Q_{X^nY^n\widehat{M}}$ by
\vspace{5pt}
\begin{IEEEeqnarray}{l}
\nonumber P_{X^nY^n\widehat{M}}(x^n,y^n,\widehat{m})=P_{X^nY^n}(x^n,y^n)\,\bold{1}\big\{\widehat{m}=f_n(x^n)\big\}\vspace{5pt}\\
\nonumber Q_{X^nY^n\widehat{M}}(x^n,y^n,\widehat{m})=Q_{X^nY^n}(x^n,y^n)\,\bold{1}\big\{\widehat{m}=f_n(x^n)\big\}.
\end{IEEEeqnarray}

Observe that $Q_{X^nY^n\widehat{M}}$ is the joint distribution governing the triple $(X^n,Y^n,\widehat{M})$ in the following channel coding setting:
 \begin{center}
 \begin{tikzpicture}[node distance=2cm,thick]
 \node (src)   [coordinate] {};
 \node (enc)   [node,right=1.7cm of src] {$g_n$};
 \node (ch)    [node,right=1cm of enc] {$P_{X|Y}$};
 \node (dec) [node,right=1cm of ch] {$f_n$};
 \node (out)  [coordinate,right=1cm of dec] {};

 \draw[arw,thick] (src) to node [midway,above] {$\mbox{Unif}\{M\}$} (enc);
 \draw[arw] (enc) to node [midway,above] {$Y^n$} (ch);
 \draw[arw] (ch) to node [midway,above] {$X^n$} (dec);
 \draw[arw] (dec) to node [midway,above] {$\widehat{M}$} (out);
\end{tikzpicture}
\end{center}
Thus, we have turned the rate-distortion code $(f_n,g_n)$ into a channel code by identifying the channel encoder as the source decoder and the channel decoder as the source encoder. Because $g_n$ is bijective, the error event for the channel coding is given by
\begin{equation}
\Ecal_n=\left\{g_n(\widehat{M})\neq Y^n\right\},
\end{equation}
and the probability of error is $\text{Pr}\{\text{Error}(n)\}\triangleq Q(\Ecal_n)$. 

On the other hand, notice that under the distribution $P_{X^nY^n\widehat{M}}$, it holds that
\begin{equation}
g_n(\widehat{M})=g_n(f_n(X^n))=Y^n,
\end{equation}
and thus $P(\Ecal_n)=0$.

Now, since variational distance has the property
\begin{equation}
\lVert P_X P_{Y|X} - Q_X P_{Y|X} \rVert = \lVert P_X - Q_X \rVert,
\end{equation}
we have by \eqref{contradiction} that
\begin{IEEEeqnarray}{rCl}
\IEEEeqnarraymulticol{3}{l}{
\lim_{n\to\infty}  \lVert P_{X^nY^n\widehat{M}} - Q_{X^nY^n\widehat{M}} \rVert
}\\
\quad &=& \lim_{n\to\infty}  \lVert P_{X^nY^n} - Q_{X^nY^n} \rVert\\
&=& 0.
\end{IEEEeqnarray}
Therefore, by the definition of variational distance,
\begin{IEEEeqnarray}{rCl}
\lim_{n\to\infty} \text{Pr}\{\text{Error}(n)\} &=& \lim_{n\to\infty} Q(\Ecal_n)\\
&=& \lim_{n\to\infty} | Q(\Ecal_n) - P(\Ecal_n) |\\
&=& 0.
\end{IEEEeqnarray}
Thus, we have demonstrated a sequence of channel codes whose rates approach the channel capacity slowly\footnote{Referring to the term $n^{-\frac12+\delta}$ in \eqref{slowconverge}.} from above, yet whose probability of error vanishes. This is impossible due to the strong converse to the channel coding theorem (e.g., \cite[Theorem 5.8.5]{Gallager1968}), yielding a contradiction.
\end{proof}
\section{Acknowledgements}
This research was supported in part by the National Science Foundation under Grants CCF-1116013 and CCF-1017431, and also by the Air Force Office of Scientific Research under Grant FA9550-12-1-0196.
\bibliographystyle{IEEEtran}
\bibliography{backDMC}
\end{document}